\documentclass[prb,twocolumn,superscriptaddress,showpacs,floatfix,longbibliography]{revtex4-2}
\usepackage{appendix}
\usepackage{tabularx}
\usepackage{amsmath}
\usepackage[margin=1in,letterpaper]{geometry}
\usepackage[final]{hyperref}
\hypersetup{
	colorlinks=true,
	linkcolor=blue,
	citecolor=blue,
	filecolor=magenta,
	urlcolor=blue
}
\usepackage{graphicx}
\usepackage{amssymb}
\usepackage{mathtools}
\usepackage{float,bm}
\usepackage{braket}
\newtheorem{theorem}{Theorem}
\newtheorem{lemma}{Lemma}

\newtheorem{proof}{Proof}

\begin{document}

\title{Higher-order Liouvillian exceptional points in the dissipative dynamics of quadratic fermions}

\author{Mingtao Xu}
\affiliation{Laboratory of Quantum Information, University of Science and Technology of China, Hefei 230026, China}
\author{Wei Yi}
\email{wyiz@ustc.edu.cn}
\affiliation{Laboratory of Quantum Information, University of Science and Technology of China, Hefei 230026, China}
\affiliation{Anhui Province Key Laboratory of Quantum Network, University of Science and Technology of China, Hefei 230026, China}
\affiliation{CAS Center For Excellence in Quantum Information and Quantum Physics, Hefei 230026, China}
\affiliation{Hefei National Laboratory, University of Science and Technology of China, Hefei 230088, China}
\date{\today}

\begin{abstract}
We propose a general class of open fermionic models where quadratic Liouvillians governing the dissipative dynamics feature { analytically characterized higher-order exceptional points (EPs).}
Invoking the formalism of third quantization, we show that, among the multiple EPs of Liouvillian, an EP with its order approaching the system size arises as { the dominant modes of the system at long times,} 
leading to a gapless Liouvillian spectrum.
By introducing perturbations, in the form of many-body quantum-jump processes, these higher-order EPs break down, leading to finite Liouvillian gaps with fractional power-law scalings.
While the power-law scaling is a signature of the higher-order EP, its explicit form is sensitively dependent on the many-body perturbation. Finally, we discuss { the long-time dynamics} which can serve as detectable signals for the higher-order Liouvillian EPs. 
\end{abstract}

\maketitle

\section{Introduction}

Exceptional points (EPs) are spectral degeneracies in non-Hermitian systems at which both eigenvalues and their corresponding eigenstates coalesce~\cite{ep1966,ep1998,ep1999,MiriExceptional,ep4,BergholtzExceptional,DingNon-Hermitian}. The associated non-Hermitian Hamiltonian, in the form of non-Hermitian matrix, becomes defective and can be cast into the Jordan form, wherein the dimensions of the Jordan blocks indicate the orders of the EPs~\cite{AshidaNonHermitian}. 
While EPs, and indeed novel features of non-Hermitian physics in general, have mostly been discussed in the context of linear or single-particle systems, they also arise in nonlinear or many-body quantum settings. 
For instance, nonlinear exceptional structures and their dynamic consequences have recently been reported in coupled resonators~\cite{nonlinearH1,nonlinearH2,nonlinearH3} and dissipative Rydberg vapors~\cite{XieChiral, ZhangExceptional,KopciuchLiouvillian}, where the nonlinearity either derives from frequency-dependent nonlinear gain or many-body effects in the thermal Rydberg gas. 

On the other hand, non-Hermiticity also arises naturally in quantum open systems. More explicitly, the Markovian dynamics of an open quantum system is typically governed by the Lindblad master equation~\cite{Breuer}
\begin{equation}
         \frac{\mathrm{d}\rho}{\mathrm{d}t} =\mathcal{L}\rho= -i[H,\rho]+\sum_\mu \Big[L_\mu\rho L_\mu^\dagger - \frac{1}{2}\{L_\mu^\dagger L_\mu,\rho\}\Big],
         \label{Lindblad1}
\end{equation}
where  $\mathcal{L}$ is referred to as the Liouvillian, $\rho$ is the density matrix, $H$ is the coherent Hamiltonian, and $\{L_\mu\}$ is a set of quantum jump operators depicting the system's coupling with its environment. 
By imposing post selection and requiring the absence of quantum jump processes described by $\sum_\mu L_\mu\rho L_\mu^\dagger$, the Liouvillian dynamics is reduced to one driven by an effective non-Hermitian Hamiltonian $H_{\text{eff}}=H-\frac{i}{2}\sum_{\mu}L_\mu^\dagger L_\mu$. 
Alternatively, one can adopt a different view by vectorizing the density matrix, such that the Liouvillian itself is mapped to a non-Hermitian matrix~\cite{MingantiSpectral}. Such a view underlies recent studies of Liouvillian EPs~\cite{SunEncircling,MingantiQuantum,MingantiHybrid,KumarNear,KhandelwalChiral,PavlovTopological,SunChiral}, where exceptional structures emerge in the Liouvillian eigenspectrum with experimentally observable outcome~\cite{GaoPhotonic,ChenDecoherence,ChenQuantum,WuExperimental}.
Nevertheless, in these early studies, the order of the Liouvillian EP is often limited to 2 or 3, whereas higher-order EPs have been well-studied in generic non-Hermitian systems, and are shown to have richer implications~\cite{DelplaceSymmetry,YoshidaHopf,SayyadRealizing,GohsrichExceptional,ShiralievaMulti}. It is therefore desirable to engineer higher-order Liouvillian EPs in a quantum many-body setting~\cite{ArkhipovLiouvillian2,ArkhipovGenerating,NakanishiPT}. 

In this work, we study a general class of hybrid-Liouvillians, where quantum jump processes are partially post selected. Focusing on quadratic open systems of fermions, where the Liouvillian eigenspectrum can be analytically solved within the third-quantization framework~\cite{ProsenThird,ProsenSpectral}, we demonstrate the presence of higher-order Liouvillian EPs in the system. 
{ Under partial post selection, the hybrid-Liouvillian is trace decreasing, so the unnormalized density matrix vanishes in the long-time limit. We therefore focus on the normalized density matrix, whose long-time behavior is governed by the slowest-decaying Liouvillian modes.}
{ 
We refer to these modes as the quasi-steady states, in analogy with the eigenstates of a non-Hermitian Hamiltonian with the largest imaginary eigenvalue component, which also dominate the long-time dynamics and regarded as the steady states of the non-Hermitian dynamics~\cite{AzconaMagic}.}
Remarkably, the quasi-steady state itself corresponds to the highest-order Liouvillian EP of the system, whose order scales with the system size. 
The resulting Liouvillian spectrum is gapless, leading to an algebraic decay toward the quasi-steady state in the dissipative dynamics. 
We show that the nature of these higher-order EPs can be revealed by introducing many-body quantum jump processes as perturbations. 
For instance, upon introducing the perturbation, the degeneracy is lifted at the highest-order EP corresponding to the quasi-steady state, but with the perturbed Liouvillian eigenvalues and the resulting Liouvillian gap exhibiting perturbation-dependent fractional power-law scalings. 
This gives rise to exponentially fast asymptotic dynamics toward the quasi-steady state, whose perturbation-dependent relaxation time should facilitate the design of dynamic detection schemes for the higher-order Liouvillian EPs. 
By engineering higher-order Liouvillian EPs in many-body quantum open systems, our work paves the way for 
further exploration of non-Hermitian physics in the quantum many-body regime.

\section{Dissipative dynamics under a hybrid-Liouvillian}

\subsection{Hybrid master equation under the third quantization}

We consider a general quantum open system described by the Lindblad master equation (\ref{Lindblad1}).
{ Given a set of quantum jump operators \(\{L_\mu\}\), we introduce a complementary set \(\{L_\mu'\}\), such that
\begin{equation}
\sum_{\mu}\left(L_\mu^\dagger L_\mu+L_\mu'^\dagger L_\mu'\right)=cI,
\end{equation}
for a positive constant \(c>0\). For the finite-size fermionic systems considered in this work, \(\sum_\mu L_\mu^\dagger L_\mu\) is a bounded positive operator. Hence, the constant \(c\) can be chosen sufficiently large so that \(cI-\sum_\mu L_\mu^\dagger L_\mu\) is positive semidefinite. This positive semidefinite operator can always be factorized as \(\sum_\mu L_\mu'^\dagger L_\mu'\), leading to the complementary jump operators.}

The full Liouvillian superoperator then becomes
\begin{equation}
    \mathcal{L}\rho=-i[H,\rho]+\sum_\mu\left(L_\mu\rho L_\mu^\dagger +L_\mu'\rho L_\mu'^\dagger \right) - c\rho.
\end{equation}
Upon post selecting only those trajectories that undergo no quantum jumps of the type $\{L_\mu'\}$, the resulting subensemble dynamics is described by the hybrid master equation
\begin{equation}
    \frac{\mathrm{d}\rho}{\mathrm{d}t}=\mathcal{L}_H\rho=-i[H,\rho]+\sum_\mu L_\mu\rho L_\mu^\dagger-c\rho.
    \label{Lindblad2}
\end{equation}
Here the last term $-c\rho$ shifts the entire Liouvillian spectrum by $-c$, but does not alter the eigenstates of the Liouvillian.

{ We emphasize that Eq.~(\ref{Lindblad2}) is not trace preserving. The density matrix $\rho(t)$ should be understood as an unnormalized density matrix associated with the post-selected ensemble in which complementary quantum jumps $L_\mu'$ do not occur. Its trace,
\begin{equation}
    P_{\rm post}(t)=\mathrm{Tr}\rho(t),
\end{equation}
is the probability of obtaining a trajectory in the post-selected ensemble.
Physical observables within such an ensemble are hence
\begin{equation}
    \langle \tilde O\rangle(t)
    =
    \frac{\mathrm{Tr}[O\rho(t)]}{\mathrm{Tr}\rho(t)}
    =
    \mathrm{Tr}[O\tilde\rho(t)],
\end{equation}
where we define the normalized density matrix $\tilde \rho(t) = \rho(t)/ P_{\rm post}(t)$. 
In the long-time limit, among the sectors with largest real part of the Liouvillian eigenvalue, the sector with the largest Jordan block dominates the long-time behavior.
}

Under these premises, we study a quadratic open system of fermions on $n$ sites, which can be conveniently analyzed under the formalism of the third quantization~\cite{ProsenThird}. In particular, a general quadratic Hamiltonian $H$ and a set of linear quantum jump operators $L_{\mu}$ can be expressed in the Majorana basis
\begin{align}
    H &=\sum_{j,k=1}^{2n}w_jh_{jk}^Mw_k,\label{eq:w1}\\
    L_\mu&=\sum_{j=1}^{2n}l_{\mu,j}^M w_j,\label{eq:w2}
\end{align}
where the matrix $h^M$ is antisymmetric $h^M=-(h^M)^T$, and $w_i$ are the operators for Majorana fermions satisfying the anti-commutation relation $\{w_j,w_k\}=\delta_{j,k}$.
For the convenience of discussion, we choose $c=\mathrm{Tr}M^M$, where the coefficient matrix $M_{jk}^M=\sum_\mu l^M_{\mu,j}(l^M_{\mu,k})^*$. 

The hybrid master equation Eq.~(\ref{Lindblad2}) can be decomposed into three parts, with $\mathcal{L}_{H}=\mathcal{L}_0+\mathcal{L}_c+\sum_\mu \mathcal{L}_{\mu}$. Here 
the unitary $\mathcal{L}_0\rho=-i[H,\rho]$, the quantum jumps $\sum_\mu \mathcal{L}_\mu\rho=\sum_\mu L_\mu\rho L_\mu^\dagger$, and the overall shift $\mathcal{L}_c\rho=-\mathrm{Tr}M^M\rho$.
Under the third quantization,
the unitary part is mapped to $\hat{\mathcal{L}}_0=-2i\sum_{j,k}^{2n} \hat f_j^\dagger h_{jk}^M \hat f_k$, where $\hat f_i$ ($\hat f_i^\dagger$) is the annihilation (creation) operator for adjoint fermions satisfying the anti-commutation relations $\{ \hat f_j, \hat f_k\} = 0$ and $\{ \hat f_j,  \hat f_k^\dagger\}=\delta_{j,k}$.
Likewise, we have  $\hat{\mathcal{L}}_\mu= \sum_{j,k=1}^{2n}M^M_{jk}\hat{\mathcal{L}}_{j,k}$, where
\begin{align}
    &\hat{\mathcal{L}}_{j,k}=e^{i\pi \hat N}\times\nonumber\\
    &\bigg(\hat{f}_j^\dagger \hat{f}_k^\dagger-\frac{1}{2}\hat{f}_j^\dagger{f}_k+\frac{1}{2}\hat{f}_j\hat{f}_k^\dagger
    -\frac{1}{2}\hat{f}_k^\dagger\hat{f}_j + \frac{1}{2}\hat{f}_k \hat{f}_j^\dagger - \hat{f}_j \hat{f}_k\bigg),
\end{align}
with the adjoint fermion number operator given by $\hat{N}=\sum_j  \hat f_j^\dagger  \hat f_j$. And the shift term becomes $\hat{\mathcal{L}}_c = -\mathrm{Tr}M^M$.
As a result, $\mathcal{L}_{H}$ is mapped to the operator $\hat{\mathcal{L}}_{H}=\hat{\mathcal{L}}_{0}+\sum_\mu\hat{\mathcal{L}}_{\mu}+\hat{\mathcal{L}}_{c}$ in the adjoint-fermion space.
Since the adjoint-fermion { parity is conserved (with $[\hat{\mathcal{L}}_{H},e^{i\pi\hat{N}}]=0$)},
the Fock space $\mathcal{K}$ decomposes into a direct sum $\mathcal{K}=\mathcal{K}^+\oplus \mathcal{K}^-$, where $\mathcal{K}^\pm$ denote the even- and odd-parity subspaces, respectively. { Since the two parity sectors evolve independently and the physical observables considered here are parity even, the odd-parity sector does not enter the expectation values analyzed below. We thus restrict our attention to the even-parity sector $\mathcal{K}^+$}. Hence, we set $e^{i\pi\hat N}=1$, and obtain the Liouvillian in the third-quantized representation 
\begin{equation}
    \hat {\mathcal{L}}_{H} = \begin{pmatrix}
        \hat{\bm{f}} ^\dagger & \hat{\bm f}
    \end{pmatrix}\begin{pmatrix}
        -Z & Y\\
        Y^T & Z^T
    \end{pmatrix}\begin{pmatrix}
        \hat{\bm f}\\
        \hat {\bm{f}}^\dagger
    \end{pmatrix}-\mathrm{Tr}M^M,
\end{equation}
where $Z = ih^M+(M^M)^r$, $Y = i(M^M)^i$, and $\hat{\bm f}=(\hat f_1,\hat f_2,\cdots,\hat f_{2n})$. 

The superscripts $r$ and $i$ denote the real and imaginary parts of the matrices, respectively.
Following Ref.~\cite{ProsenThird}, we introduce $4n$ adjoint Majorana fermions $\hat a_{2j-1}=\frac{1}{\sqrt{2}}(\hat f_j + \hat f_j^\dagger)$ and $\hat a_{2j}=\frac{i}{\sqrt{2}}(\hat f_j-\hat f_j^\dagger)$. In terms of these operators, the Liouvillian $\hat{\mathcal{L}}_{H}$ is cast into a quadratic form
\begin{equation}
    \hat{\mathcal{L}}_{H}=\sum_{jk}\hat a_j A_{jk}\hat a_k- \mathrm{Tr}M^M,
\end{equation}
where the $4n\times 4n$ shape matrix $A$ is
\begin{align}
A &=-ih^M\otimes I-(M^M)^i\otimes\sigma_x - (M^M)^r\otimes \sigma_y.
\label{shapematrix1}
\end{align}
In the case that $A$ is diagonalizable, there exist $4n$ linearly independent eigenvectors with corresponding eigenvalues $\pm \beta_j$ (dubbed rapidities with $j= 1,\cdots,2n$).
It follows that the Liouvillian can be brought into a diagonal form
\begin{equation}
      \hat{\mathcal{L}}_{H} = -2\sum_{j=1}^{2n}\beta_j \hat b_j' \hat{b}_j +\left(\sum_{j=1}^{2n}\beta_j-\mathrm{Tr}M^M\right),
\end{equation}
where the fermion operators $\hat{b}'_j$ and $\hat{b}_j$ satisfy
$\{\hat b_j,\hat b_k\}=0$, $\{\hat b_j',\hat b_k'\}=0$, and $\{\hat b_j,\hat b_k'\}=\delta_{jk}$.
Given $2n$ binary integers $\nu_j\in\{0,1\}$, the Liouvillian spectrum can be expressed as
\begin{equation}
\label{spectrum}
    \lambda_\nu = -2\sum_{j=1}^{2n}\nu_j\beta _j+\sum_{j=1}^{2n}\beta _j-\mathrm{Tr}M^M.
\end{equation}

While the diagonalizable case presented above has been extensively studied~\cite{SongNonHermitian,OkumaQuantum,LiEngineering,YangLiouvillian}, the nondiagonalizable scenario is rarely explored but gives rise to exactly solvable Liouvillian EPs, as we show below.

\subsection{Nondiagonalizable shape matrix: higher-order Liouvillian EPs} 

As a special case of nondiagonalizable shape matrix, we consider a general quadratic Hamiltonian $H=\sum_{j\neq k}^nc_j^\dagger h_{jk}c_k$, where $h$ is the Hamiltonian matrix in the fermion basis, and
the fermion operators here are related to the Majorana fermions in Eqs.~(\ref{eq:w1}) and (\ref{eq:w2}), with
$c_{j}=\frac{1}{\sqrt{2}}( w_{2j-1} -i w_{2j})$ and $c_{j}^\dagger=\frac{1}{\sqrt{2}}(w_{2j-1}+i w_{2j})$. 
We also consider linear quantum jump operators
$L_\mu=\sum_{j=1}^n l_{\mu,j}c_j$, so that the 
corresponding matrices in the Majorana basis are
\begin{align}
    h^M &=\frac{1}{2}( {h}^r\otimes\sigma_y +ih^i\otimes I_2), \\
    M^M &=\frac{1}{2}M\otimes(I_2-\sigma_y),
\end{align}
where $M_{ i j } = \sum _ { \mu } l _ { \mu , i } l _ { \mu , j } ^ { * }$.
The shape matrix $A$ can then be written as
\begin{equation}
    A=T_1\otimes I_2+T_2\otimes i\sigma_y,
    \label{shapematrix}
\end{equation}
where the $2n\times 2n$ matrices $T_1$ and $T_2$ are
\begin{align}
     T_1 &= \begin{pmatrix}
           -\frac{1}{2}h^{i} & \frac{1}{2}iM\\
            -\frac{1}{2}iM^T & -\frac{1}{2}h^{i}
       \end{pmatrix}, \\
       T_2&= \begin{pmatrix}
           -\frac{1}{2}h^{r}  & -\frac{1}{2}M \\
     -\frac{1}{2}M^T& -\frac{1}{2} h^{r}
       \end{pmatrix}.
\end{align}
Under a unitary $U$ that transforms the Pauli matrices according to $\sigma_{x,y,z}\to \sigma_{y,z,x}$, the shape matrix $A$ is brought into a block-diagonal form
\begin{equation}
    UAU^\dagger = T_1\otimes I_2+T_2\otimes i\sigma_z=\begin{pmatrix}
        T_+ & \\
        & T_-
    \end{pmatrix},
\end{equation}
where $T_\pm=T_1\pm iT_2$ are respectively the lower and upper triangular block matrices
\begin{align}
    T_+&= T_1+iT_2 = \begin{pmatrix}
        -\frac{i}{2}h^T & \\
        -iM^T & -\frac{i}{2}h^T
    \end{pmatrix},\label{T_p}\\
    T_-&=T_1-iT_2 = \begin{pmatrix}
\frac{i}{2}h&  iM\\
       & \frac{i}{2}h
    \end{pmatrix}.
    \label{T_m}
\end{align}
Since $T_+=-(T_-)^T$, the characteristic polynomial of the shape matrix factorizes into the characteristic polynomials of $\frac{i}{2}h$, namely,
\begin{equation}
    p_A(\beta) = [p_{\frac{i}{2}h}(\beta)]^2[p_{\frac{i}{2}h}(-\beta)]^2,
\end{equation}
where $p_A(\beta)=\det[A-\beta I_{4n}]$, and $p_{\frac{i}{2}h}(\beta)=\det[\frac{i}{2}h-\beta I_{n}]$. 


Importantly, while the eigenvalues of the shape matrix $A$, which correspond to the rapidities, can be easily constructed from those of $\frac{i}{2}h$, the triangular block matrix $T_\pm$ is nondiagonalizable (see Appendix \ref{appendixA}). 
Specifically, denoting the eigenvalues of $\frac{i}{2}h$ as $\beta'_j$, each $\beta'_j$ corresponds to a Jordan block of order $n_j$, with $n_j\in\{1,2\}$.
Denoting $K$ as the total number of Jordan blocks of $T_+$, we have $\sum_{j=1}^K n_j=2n$.
It follows that the quadratic form for the Liouvillian is given by
\begin{align}
        \hat{\mathcal{L}}_{H,+} =& -2\sum_{j=1}^{n}\sum_{k=1}^{3-n_j}\bigg(\sum_{l=1}^{n_j}\beta'_{j} \hat b_{j,k,l}' \hat{b}_{j,k,l}\notag\\&+\sum_{l=1}^{n_j-1}\hat{b}_{j,k,l+1}'\hat b_{j,k,l}\bigg)
        +2\sum_{j=1}^{n}\beta'_j-\mathrm{Tr}M.
    \label{Liouvillian}
\end{align}
Here $\hat b_{j,k,l}$ and $\hat b_{j,k,l}'$ are operators for the adjoint fermions.
Since each $\beta'_j$ is associated with a nontrivial Jordan block of order $n_j=2$ or two trivial Jordan blocks of order $n_j=1$, the subscripts $k$ and $l$ are determined by $n_j$: for $n_j = 1$, we have $k = 1, 2$ and $l = 1$; whereas for $n_j = 2$, we have $k = 1$ and $l = 1, 2$.
The total number of operators $\hat b_{j,k,l}$ is thus $\sum_{j=1}^n\sum_{k=1}^{3-n_j}\sum_{l=1}^{n_j}1=2n$. The $4n$ operators $\hat b_{j,k,l}$ and $\hat b_{j,k,l}'$ introduced here satisfy
\begin{align}
 &   \{\hat b_{j,k,l},\hat b_{j',k',l'}\}=0,\quad \{\hat b_{j,k,l}',\hat b_{j',k',l'}'\}=0,\notag \\
 &\{\hat b_{j,k,l}, \hat b_{j',k',l'}'\}=\delta_{j,j'}\delta_{k,k'}\delta_{l,l'}.
\end{align}

It is noteworthy that the second term on the right-hand side of Eq.~(\ref{Liouvillian}) is off-diagonal, which vanishes for all trivial Jordan blocks $n_j=1$,  has no contribution to the Liouvillian spectrum.
The Liouvillian eigenspectrum is then given by
\begin{equation}
      \lambda_\nu =- 2\sum_{j=1}^{n}\sum_{k=1}^{3-n_j}\nu_{j,k}\beta' _j+2\sum_{j=1}^{n}\beta_j'-\mathrm{Tr}M,
      \label{spectrum2}
\end{equation}
where $\nu_{j,k}$ take integer values in $\{0,\cdots,n_j\}$, and the total number of $\nu_{j,k}$ equals $K$ (the total number of Jordan blocks of $T_+$).

From Eq.~(\ref{spectrum2}), we conclude that the Liouvillian spectrum $\{\lambda_\nu\}$ is always gapless, since $\mathrm{Re}\lambda_\nu=-\mathrm{Tr}M$. 
{ Thus the long-time behavior of the normalized dynamics is controlled by the largest Jordan block of the hybrid-Liouvillian.}

It is generally difficult to write down the Jordan form of the Liouvillian spectrum explicitly. However, based on the spectral theorem~\cite{ProsenSpectral}, one can determine the size of the largest Jordan block of the Liouvillian associated with a particular configuration $\nu$ in $\lambda_\nu$, which reads
\begin{equation}
    1+\sum_{j=1}^{n}\sum_{k=1}^{3-n_j}(n_j-\nu_{j,k})\nu_{j,k}.
    \label{size}
\end{equation}
Evaluation of the expression above gives the orders of the Liouvillian EPs.

Since $n_j\in\{1,2\}$ and $\nu_{j,k}\in\{0,\cdots,n_j\}$, the summand in Eq.~(\ref{size})
is nonvanishing and equal to unity, if and only if $n_j=2$ and $\nu_{j,k}=1$.
Let $K_1$ and $K_2$ denote the numbers of Jordan blocks of $T_+$ with $n_j=1$ and $n_j=2$, respectively.
According to Eq.~(\ref{size}), the size of the largest Jordan block is equal to one plus the number of nonvanishing terms in the summand. 
Hence, EPs of the Liouvillian can be of any order ranging from 2 to $1+K_2$.

In the next section, we present a concrete example where $K_1=0$ and $K_2=K=n$, so that the maximum order of the Liouvillian EP is $n+1$, which scales with the system size $n$. 
We note that the high-order EPs originate from a combinatorial stacking of identical defective single-particle $2\times2$ Jordan blocks, which is underpinned by the fermion anti-commutation relations and hence a collective many-body effect.

{ We also emphasize that the parity-odd sector also contains higher-order Liouvillian EPs. In the odd-parity sector, we may set $e^{i\pi\hat N}=-1$ and there is a global sign change in the contribution of jump-induced quadratic terms in Eq.~(\ref{shapematrix1}). Therefore, the off-diagonal Jordan structure in the shape matrix is the same for the odd-parity sector.}

\section{A solvable example}

\subsection{Higher-order Liouvillian EP}

\begin{figure}
    \centering
    \includegraphics[width=1\linewidth]{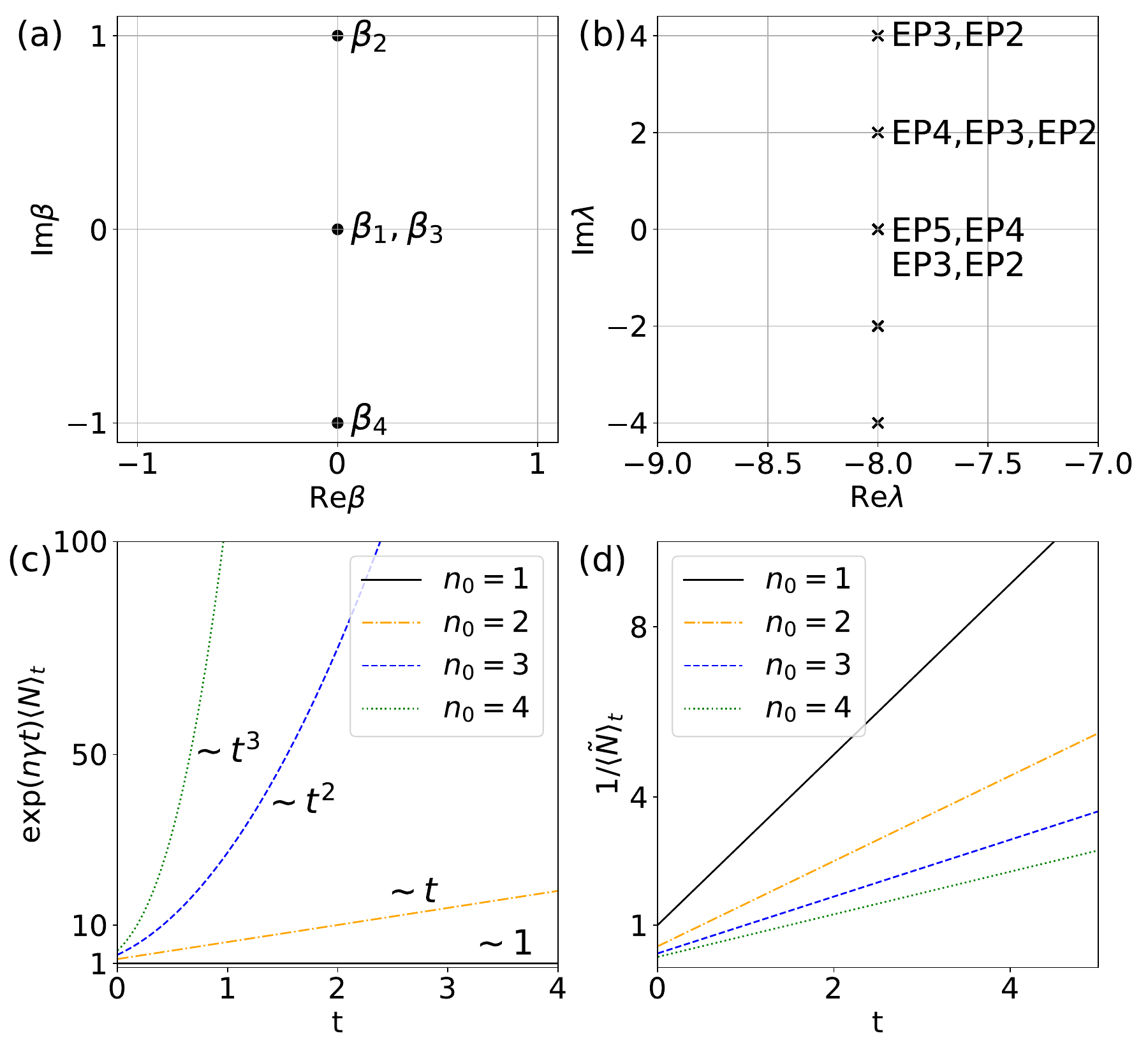}
    \caption{Liouvillian eigenspectrum and particle number dynamics for the dissipative fermion chain in Eqs.~(\ref{example_1}) and (\ref{example_2}). We set the system parameters as $t=1$, $\gamma=2$, and $n=4$. (a)(b) The rapidities and hybrid-Liouvillian spectrum, respectively, the hybrid-Liouvillian spectrum { has a common real part $-n\gamma$} and is highly degenerate. (c)(d) Dynamics of $\exp(n\gamma t)\braket{N}_t$ and $1/\braket{\tilde{N}}_t$ for different initial states $\rho(0,n_0)$ characterized by $n_0$, respectively.}
    \label{Fig1}
\end{figure}

In this section, we consider a concrete model with $n$ sites, whose Hamiltonian and jump operators are given by
\begin{align}
    H&=-t\sum_{j}c_j^\dagger c_{j+1}+\text{H.c.},\label{example_1}\\
    L_j&=\sqrt{\gamma}c_j,\quad j=1,2,\cdots, n.
    \label{example_2}
\end{align}
The hybrid-Liouvillian then takes the form
\begin{equation}
    \mathcal{L}_H\rho=-i[H,\rho]+\gamma\sum_i c_j\rho c_j^\dagger - n\gamma\rho,
    \label{eq:Lex}
\end{equation}
and the coefficient matrix $M=\gamma I_n$. We thus obtain
\begin{equation}
    T_+=-\frac{1}{2}i\begin{pmatrix}
        h & 0 \\
        2\gamma I_n & h
    \end{pmatrix},\quad T_-=-T_+^T.
\end{equation}
While our results here are independent of the boundary conditions, for simplicity, we adopt the periodic boundary condition below to exploit the translational symmetry of the system.
This allows us to write down the Fourier transform of matrix $T_+$
\begin{equation}
    T_+(q) = -it\cos q-\begin{pmatrix}
       0 & 0\\
       i\gamma & 0
    \end{pmatrix}.
\end{equation}
Note that for any given quasimomentum $q\in (0,2\pi]$, $T_+(q)$ is not diagonalizable.
Therefore, each $\beta'_j=-it\cos q_j$ (with $q_j=\frac{2\pi j}{n}$) is associated with a Jordan block of size $n_j=2$. The third-quantized representation of the Liouvillian is
\begin{align}
            \hat{\mathcal{L}}_{H} =& 2it\sum_{j=1}^{n}\cos\frac{2\pi j}{n}(\hat b_{j,1,1}' \hat{b}_{j,1,1}+ \hat b_{j,1,2}' \hat{b}_{j,1,2})\notag\\
    &  -2 \sum_j \hat{b}_{j,1,2}'\hat b_{j,1,1} - n\gamma.
\end{align}
The Liouvillian eigenspectrum is thus given by Eq.~(\ref{spectrum2}) with $n_j=2$
\begin{align}
 \lambda_\nu = 2it\sum_{j=1}^{n}\nu_{j,1}\cos \frac{2\pi j}{n}-n\gamma,
 \end{align}
where $\nu_{j,1}\in\{0,1,2\}$. We therefore have $K_1=0$, $K_2=n$. According to Eq.~(\ref{size}),
the maximum order of a Liouvillian EP is $n+1$, which is reached when $\nu_{j,1}=1$ for all $j$. The EP is associated with the Liouvillian eigenvalue $\lambda_0=-n\gamma $. We therefore conclude that, in this example, the Liouvillian spectrum is highly degenerate and hosts a large number of EPs whose orders range from $2$ to $n+1$. The highest-order Liouvillian EP corresponds to the quasi-steady state of the system.


In Fig.~\ref{Fig1}(a)(b), we show the rapidities and Liouvillian spectrum for $t=1$, $\gamma=2$, and $n=4$, respectively, wherein the order of the EPs are labelled. The eigenvalues in the $4^{4}$-dimensional Liouvillian space collapse to five discrete points in the complex plane: $-8$, $-8\pm2i$ and $-8\pm 4i$.
At $\lambda=-8$ in particular, there exists a 5th order Liouvillian EP (labelled EP5 in the figure), along with multiple EPs of orders $2$, $3$ and $4$, respectively. The EP5 is also the quasi-steady state of the system.

For further analysis, we introduce the generalized left and right eigenmatrices of Liouvillian $l_{\nu,k}$ and $r_{\nu,k}$, which satisfy the generalized characteristic equations (the Jordan chains)~\cite{MingantiQuantum}
{ 
\begin{align}
&(\mathcal{L}_H-\lambda_\nu I) r_{\nu,0}=  (   \mathcal{L}_H^\dagger-\lambda_\nu^* I) l_{\nu,m_\nu-1}=0, \label{Jordanchain1}\\
&(\mathcal{L}_H-\lambda_\nu I) r_{\nu,k}=\chi_{\nu,k} r_{\nu,k-1}, \label{Jordanchain2}\\
&(\mathcal{L}_H^\dagger-\lambda_\nu^* I) l_{\nu,k-1}=\chi_{\nu,k}^* l_{\nu,k},
\label{Jordanchain3}
\end{align}}
and the biorthogonal relations $\mathrm{Tr}(l_{\nu,k}r_{\nu',k'})=\delta_{\nu,\nu'}\delta_{k,k'}$. 
In the above expressions, we denote $m_\nu$ as the size of the Jordan block corresponding to the eigenvalue $\lambda_\nu$ under configuration $\nu$. { Here the coefficients \(\chi_{\nu,k}\) depend on the normalization convention of the generalized eigenmatrices.}
In particular, for the largest Jordan block with eigenvalue $\lambda_0=-n\gamma$, the left and right eigenmatrices are
\begin{align}
      l_{0,k}& = \sum_{\bm\eta _k}\ket{\bm\eta_k}\bra{\bm\eta_k}, \\
      r_{0,k} &= \frac{1}{C_n^k}\sum_{\bm\eta_k}\ket{\bm\eta_k}\bra{\bm\eta_k},
      \label{eigenmatrices}
\end{align}
where $\ket{\bm\eta_k}=c_{\eta_1}^\dagger c_{\eta_2}^\dagger\cdots c_{\eta_k}^\dagger\ket{0}$ is a $k$-particle state ($k=0,1\cdots,n$) and we denote $\bm \eta_k=(\eta_1,\eta_2,\cdots,\eta_k)$ with $\eta_1<\eta_2<\cdots< \eta_k$. The right eigenmatrices $r_{0,k}$ thus describes a maximally mixed state of all $k$-particle states. 
{ Based on the generalized eigenmatrices defined in Eq.~(\ref{eigenmatrices}), for the largest Jordan block, we have $\chi_{0,k}=\gamma k$ and the Jordan chains
\begin{equation}
    (\mathcal{L}_H+n\gamma) r_{0,k}=\gamma k r_{0,k-1}, \quad k=1,2,\cdots,n.
\end{equation}
}

Important features of the Liouvillian spectrum are manifest in the system dynamics. Here we focus on the dynamics of the particle number operator $N=\sum_j c_j^\dagger c_j$.
The formal solution of Eq.~(\ref{Lindblad2}) is given by~\cite{MingantiSpectral}
{ 
\begin{align}
      \rho(t) &= \sum_\nu e^{\lambda_\nu t}\sum_{k=0}^{m_\nu-1}\left[\sum_{j=0}^{m_\nu-1-k}\frac{t^j}{j!}c_{\nu,k+j}\prod_{p=k+1}^{k+j}\chi_{\nu,p}\right]r_{\nu,k},
      \label{formalsolution}
\end{align}}
where the coefficient $c_{\nu,k}=\mathrm{Tr}[\rho(0)l_{\nu,k}]$ depends on the initial state, { and we denote $\prod_{k+1}^k\chi_{\nu,p}=1$. For the largest Jordan block, we have $\prod_{p=k+1}^{k+j}\chi_{0,p}=\gamma^j\frac{(k+j)!}{k!}$.
It follows that 
\begin{align}
P_{\text{post}}&=\mathrm{Tr}\rho(t)=e^{-n\gamma t}\sum_{k=0}^n c_{0,k}(1+\gamma t)^k \label{eq:trace},\\
    \braket{N}_t&=\mathrm{Tr}N\rho(t)\notag  \\
 & = e^{-n\gamma t}\sum_{k=1}^n c_{0,k}k(1+\gamma t)^{k-1},\label{eq:N}
\end{align}}
where we have used $\mathrm{Tr}Nr_{0,k}=k$, and $\mathrm{Tr}N r_{\nu,k} = \mathrm{Tr}r_{\nu,k} =0$, for $\nu\neq 0$ (see Appendix~\ref{appendixB}). 
Note that the dynamics is solely determined by the eigenvalue $\lambda_0=-n\gamma$ and the corresponding quasi-steady state eigenmatrices. 

According to Eq.~(\ref{eq:N}), apart from a global exponential decay, the dynamics of the total particle number exhibit a polynomial scaling in time, with the maximum order given by $n-1$, related to the order of the Liouvillian EP. 
To highlight the gapless nature of the spectrum,  we normalize the density matrix $\tilde\rho(t)=\rho(t)/\mathrm{Tr}\rho(t)$ to eliminate the global exponential decay.
{ The dynamics of the total particle number under $\tilde{\rho}(t)$ is 
\begin{align}
    \braket{\tilde{N}}_t=&\mathrm{Tr}N\tilde{\rho}(t)\notag \\
=&\frac{\sum_{k=1}^n c_{0,k}k(1+\gamma t)^{k-1}}{\sum_{k=0}^n c_{0,k}(1+\gamma t)^{k}},
\end{align}
which, for a given initial state and in the long-time limit, scales as
\begin{align}
    \braket{\tilde{N}}_t\sim &\frac{c_{0,n_0}n_0(1+\gamma t)^{n_0-1}}{ c_{0,n_0}(1+\gamma t)^{n_0}}
   =\frac{n_0}{1+\gamma t}\sim\frac{n_0}{\gamma t}.
\end{align}}
Here $n_0$ is the maximum value of $k$ in the summations, determined by the maximum particle number of the initial state.

The analysis above is numerically confirmed in Fig.~\ref{Fig1}(c)(d). Here, the initial state is set to $\rho(0,n_0)=\prod_{i=1}^{n_0}c_i^\dagger\ket{0}\bra{0}\left(\prod_{j=1}^{n_0}c_j^\dagger\right)^\dagger$, which yields $c_{0,k}=\delta_{k,n_0}$ for different $n_0$. The results show that $\exp(n\gamma t)\braket{N}_t$ exhibits different polynomial scalings, depending on the initial state, whereas $1/\braket{\tilde{N}}_t$ for all initial states displays a linear scaling with respect to the evolution time $t$.


\begin{figure}
    \centering
    \includegraphics[width=1\linewidth]{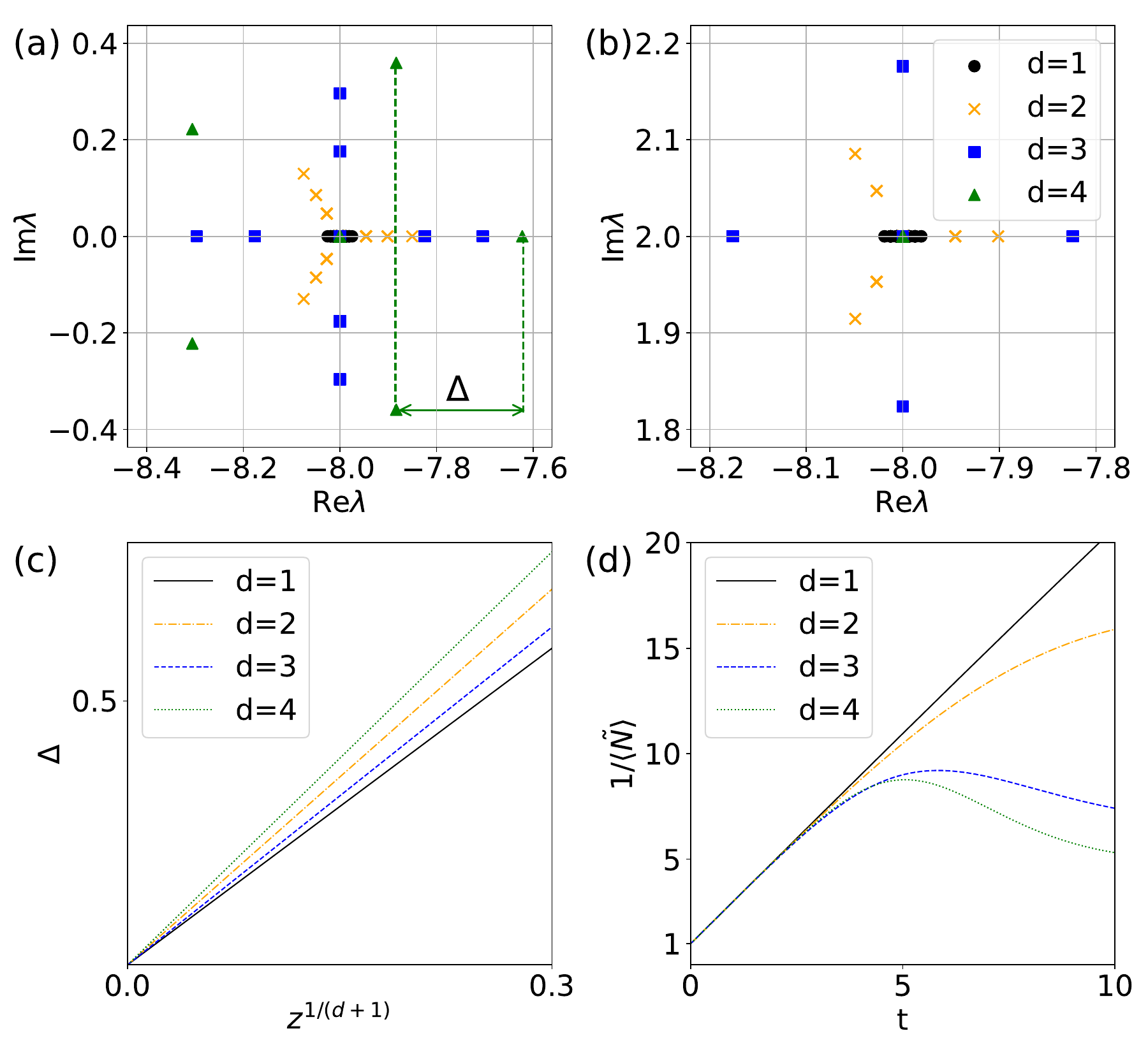}
    \caption{Liouvillian spectrum and dynamics subject to perturbations of the form $z\mathcal{L}_d$. We set the system parameters as $t=1,\gamma=2,n=4$. (a)(b) The perturbed spectra of the hybrid-Liouvillian,  $\mathcal{L}_H'=\mathcal{L}_H+z\mathcal{L}_d$ with $z=10^{-5}$, in the vicinity of (a) $\lambda=-8$ and (b) $-8+2i$, respectively, under different perturbations characterized by $d$. As an illustration, in panel (a), we mark the Liouvillian gap $\Delta$ for $d=4$.
    (c) The Liouvillian spectral gap $\Delta$ of the perturbed hybrid-Liouvillian spectra as functions of $z^{1/(1+d)}$.
    (d) Dynamics of  $1/\braket{\tilde N}_t$ under perturbations with different $d$ but a fixed $z=10^{-5}$.
   The system is initialized in $\rho(0,n_0=1)=c_1^\dagger\ket{0}\bra{0}c_1$.}
    \label{Fig2}
\end{figure}

\subsection{Perturbation near the higher-order EP}

A distinctive feature of EPs is the fractional power-law scaling of eigenvalues under a small perturbation in their vicinity.
With the presence of the higher-order Liouvillian EPs in our model, we show that different forms of perturbation can lead to distinct power-law scalings in the Liouvillian eigenvalues and the spectral gap, a phenomenon consistent with the behavior of higher-order EPs in non-Hermitian systems.

For this purpose, we vectorize the density matrix, so that the Liouvillian in Eq.~(\ref{eq:Lex}) becomes a non-Hermitian operator, given by
\begin{align}
    \mathcal{L}_H=-i(H\otimes I-I\otimes H^T)+\gamma\sum_j c_j\otimes c_j^*-n\gamma.
\end{align}
Let $J_0$ denote the largest Jordan block of $\mathcal{L}_H$, and $J_\perp$ the complementary block. The { Jordan form} of $\mathcal{L}_H$ in the fermion representation can then be expressed as
\begin{equation}
    J=P^{-1}\mathcal{L}_HP=\begin{pmatrix}
        J_0 & \\
        & J_\perp
    \end{pmatrix},\label{eq:J}
\end{equation}
where $P=(R_0,R_\perp)$, $P^{-1}=(L_0,L_\perp)^T$, $R_0$ and $L_0$ are the right and left eigenmatrices of $J_0$, respectively, while $R_\perp$ and $L_\perp$ are the right and left eigenmatrices of $J_\perp$.

We denote $\bm\xi_d=(\xi_1,\xi_2,\cdots,\xi_d)$ with $\xi_1<\xi_2<\cdots< \xi_d$ (for $d=1\cdots,n$), and introduce quantum jump operators of the form $L_{\bm\xi_d}=\sqrt{\gamma}c_{\xi_1}^\dagger c_{\xi_2}^\dagger\cdots c_{\xi_d}^\dagger$. In the following, we first show that the perturbed spectrum of $J_0$ exhibits a fractional power-law scaling $z^{1/(d+1)}$, under the perturbation 
\begin{equation}
    z\mathcal{L}_d\rho=z\sum_{\bm\xi_d}L_{\bm\xi_d}\rho L_{\bm\xi_d}^\dagger.\label{eq:perturb}
\end{equation}
Note that, while the vectorization of the perturbation term is $z\mathcal{L}_d=z\sum_{\bm \xi_d} L_{\bm\xi_d}\otimes L_{\bm\xi_d}^*$, the perturbation applied to the Jordan block $J$ is 
\begin{align}
    P^{-1}z\mathcal{L}_d P=z\begin{pmatrix}
        L_0^T \mathcal{L}_d R_0  & L_0^T \mathcal{L}_d R_\perp \\
        L_\perp^T \mathcal{L}_d R_0 & L_\perp^T \mathcal{L}_d R_\perp
    \end{pmatrix}.
    \label{zL_d}
\end{align}

In order to establish the relationship between perturbation $z\mathcal{L}_d$ and the generalized eigenmatrices of $\mathcal{L}_H$, we introduce the basis vectors of the fermionic Fock space, and express $L_{\bm\xi_d}$ and $z\mathcal{L}_d$ in this basis.
Denoting $\bm{\zeta}_m=(\zeta_1,\zeta_2,\cdots,\zeta_m)$, we have the Fock-space basis $\ket{\bm\zeta_m}=c_{\zeta_1}^\dagger c_{\zeta_2}^\dagger\cdots c_{\zeta_m}^\dagger\ket{0}$. 
Considering the completeness condition $\sum_m\sum_{\bm \zeta_m}\ket{\bm \zeta_m}\bra{\bm \zeta_m}=I$, $L_{\bm\xi_d}$ can be expressed as
\begin{align}
    L_{\bm\xi_d} &= \sum_m \sum_{\bm\zeta_m} L_{\bm\xi_d}\ket{\bm\zeta_m}\bra{\bm\zeta_m} \notag \\
    &=\sqrt{\gamma}\sum_m\sum_{\bm\zeta_m\cap \bm\xi_d=0}(-1)^{\sigma(\bm\xi_d,\bm\zeta_m)}\ket{\bm\xi_d\cup \bm\zeta_m}\bra{\bm\zeta_m},
\end{align}
where $\sigma(\bm \xi_d,\bm \zeta_m)$ is the sign factor that arises from the anti-commutation relations.

To evaluate Eq.~(\ref{zL_d}), we note that 
\begin{align}
    \mathcal{L}_d r_{0,k} & =\gamma\sum_{\bm \xi_d}\sum_{m,m'}\sum_{\bm\zeta_ m\cap \bm\xi_d=0}\sum_{\bm\eta_ {m'}\cap\bm\xi_d=0}\notag \\
&\quad\quad\ket{\bm\zeta_m\cup\bm\xi_d}\ket{\bm\eta_{m'}\cup\bm\xi_d}\bra{\bm\zeta_m}\bra{\bm\eta_{m'}}r_{0,k}\notag \\
    &=\gamma \sum_{\bm\xi_d}\sum_{\bm\zeta_k\cap \bm\xi_d=0}\ket{\bm\zeta_k\cup\bm\xi_d}^{\otimes 2}\bra{\bm\zeta_k}^{\otimes 2}r_{0,k}\notag \\
    & \sim \sum_{\bm\zeta_{d+k}}\ket{\bm\zeta_{d+k}}^{\otimes 2} \sim r_{0,d+k}.
\end{align}
It follows that the diagonal block $z(  J_{0,d})_{j,k}:= z(L_0^T \mathcal{L}_d R_0)_{j,k} \sim l_{0,j}^T r_{0,d+k}=\delta_{j,d+k}$, whereas the off-diagonal block vanishes $ (L_\perp^T \mathcal{L}_d R_0)_{j,k} \sim l_{\perp,j}^T r_{0,d+k}=0$ for all $j,k$, because of the biorthogonal relations of the generalized eigenmatrices.
Similarly, we have $ L_0^T \mathcal{L}_d R_\perp=0$, which leads to a block-diagonal perturbation matrix
\begin{equation}
    P^{-1}z\mathcal{L}_d P = z\begin{pmatrix}
           J_{0,d} & \\
          &   J_{\perp,d}
    \end{pmatrix},
\end{equation}
where we denote $  J_{\perp,d}=L_\perp^T \mathcal{L}_d R_\perp$. Here the explicit form of $  J_{0,d}$ is
\begin{equation}
  J_{0,d}=\begin{pmatrix}
    & & & & &\\
    & & & & &\\
   \alpha_0 & & & & &\\
   &\alpha_1 & & & &\\
   & & \ddots & & &\\
   & & &\alpha_{n-d} & &
\end{pmatrix},
\end{equation}
where $\alpha_0,\alpha_1,\cdots,\alpha_{n-d}$ are constants. 
Denoting $\lambda$ and $\lambda_0$ as the eigenvalues of $J_0+z  J_{0,d}$ and $J_0$, respectively, the characteristic polynomial of the perturbed largest Jordan block $J_0+z  J_{0,d}$ is given by
\begin{align}
    &\det(J_0+z  J_{0,d}-\lambda I_{n+1})\notag \\
    \simeq&( -\lambda+\lambda_0)^{n-d}[(-\lambda+\lambda_0)^{d+1}+z\sum_i\alpha ].
    \label{determinant}
\end{align}
It follows that the eigenvalues of the largest Jordan block are modified by terms of the order $z^{1/(1+d)}$. Hence, depending on the form of the perturbation, the various fractional power-law scalings of our $(n+1)$th order Liouvillian EP can be recovered. 

More relevant to the system dynamics, we now examine the behavior of the perturbed spectral gap. In particular, we consider the contribution of $J_{\perp,d}$ and demonstrate that, for all $d$, the spectral gap of the perturbed Liouvillian scales as $z^{1/(1+d)}$, which is consistent with the scaling of $\lambda$ in Eq.~(\ref{determinant}).

First, in the case of $d=n$, we have $\mathcal{L}_{d=n}=\sum_{\bm \xi_n}L_{\bm \xi_n}\otimes L_{\bm \xi_n}^*=\gamma r_{0,n}l_{0,0}^T$, so that the perturbation applied to $J_{\perp}$ vanishes exactly, $  J_{\perp,n}:=L_\perp^T \mathcal{L}_{d=n} R_\perp=0$. The spectral gap thus arises from the contribution of $J_{0,n}$, which is $z^{1/(1+n)}$.
Then, consider the special case of $d=1$. The perturbation is quadratic $\mathcal{L}_{d=1}\rho=\gamma\sum_jc_j^\dagger\rho c_j$, which is analytically solvable and physically associated with imperfect post selection. As discussed in more detail in Appendix~\ref{appendixC}, all EPs disappear and the associated eigenvalues and hence the spectral gap change with a fractional scaling $z^{1/2}$.

For the intermediate regime $1<d<n$, an analytical solution of $J_{\perp,d}$ is unattainable.
We therefore numerically study the evolution of the Liouvillian spectrum and the scaling of the spectral gap.
In Fig.~\ref{Fig2}(a)(b), we show the perturbed Liouvillian spectra near $\lambda=-8$ and $-8+2i$, respectively, for $n=4$ and with different $d$. The splitting of the EPs under perturbation is clearly visible.
In Fig.~\ref{Fig2}(c), we demonstrate that the spectral gap indeed grows with $z^{1/(1+d)}$, as expected.

Physically, the highest-order EP of the system occurs at $\lambda_0=-n\gamma$, which corresponds to the vacuum state. 
Once the perturbation $z\mathcal{L}_d$ is applied, the degeneracy at the EP is lifted, and the vacuum state is no longer the quasi-steady state. The perturbation $z\sum_{\bm\xi_d}L_{\bm\xi_d}\rho L_{\bm\xi_d}^\dagger$ corresponds to quantum channels which create $d$ fermions simultaneously. Consequently, a larger $d$ results in a more pronounced shift of the eigenstates from the vacuum state. 
Furthermore, as discussed above, under $z\mathcal{L}_d$, the Liouvillian spectrum is no longer gapless, but acquires a gap $\Delta\sim z^{1/(d+1)}$. Consequently, the total particle number $\braket{\tilde{N}}_t$, evaluated with the normalized density matrix $\tilde\rho$, relaxes exponentially to the { perturbed quasi-steady state} with a time scale $\tau\sim 1/\Delta\sim z^{-1/(1+d)}$. In Fig.~\ref{Fig2}(d), we show the dynamics of $1/\braket{\tilde{N}}_t$ under perturbations with different $d$.
While the relaxation time is divergent in the absence of perturbation, it is no longer so under perturbations.
With a given small $z$, a larger $d$ leads to a larger Liouvillian gap, and hence more pronounced deviation from the linear scaling of $1/\braket{\tilde{N}}_t$ in time.

Finally, we note that, for more general forms of perturbative quantum jumps, one can express them in the basis of the fermionic Fock space. Similar to our discussions above, the order of the scaling can also be obtained.

\subsection{Experiment implementation}

{ Before discussing possible implementations, we first estimate the success probability of the post-selection procedure, which is the trace of the unnormalized density matrix. As demonstrated in Eq.~(\ref{eq:trace}), apart from the overall exponential decay, the trace of the unnormalized density matrix is further modified by a polynomial factor of the order $n$. The latter originates from the higher-order Liouvillian EP. Such a polynomial slow-down is most effective at high fillings. For instance, consider the fully-filled initial state $n_0=n$, we have
\begin{equation}
    P_{\text{post}}(t)=\mathrm{Tr}\rho(t)=e^{-n\gamma t}(1+\gamma t)^{n}.
\end{equation}

For experimentally relevant systems with only few lattice sites, this probability remains appreciable in the experimentally relevant window \(\gamma t\lesssim 1\). For instance, we have $P_{\text{post}}(\gamma t=0.5)\simeq 0.685$ and $P_{\rm post}(\gamma t=1)\simeq 0.293$ for \(n=4\); $P_{\text{post}}(\gamma t=0.5)\simeq 0.516$ and $P_{\rm post}(\gamma t=1)\simeq 0.117$ for \(n=7\); $P_{\text{post}}(\gamma t=0.5)\simeq 0.389$ and $P_{\rm post}(\gamma t=1)\simeq 0.046$ for $n=10$. In the same time window, 
the normalized observable already shows a clear signal reflecting features of the EP. Thus, although the post-selection probability becomes exponentially small in the thermodynamic or asymptotic long-time limit, a proof-of-principle observation in a finite-size system is still experimentally feasible.

We now briefly discuss how the corresponding post-selected dynamics may be implemented in such a few-site setting. As illustrated in Fig.~\ref{Fig3}, to realize the complementary jump process $L_\mu'=\sqrt{\gamma}\,c_\mu^\dagger$, one introduces an auxiliary fermionic mode $d_\mu$ for each $c_\mu$. The auxiliary modes are coherently coupled to states in a Markovian reservoir, and can locally decay to the corresponding fermion mode $c_\mu$. 
Adiabatically eliminating the auxiliary modes gives rise to jump operators of the desired form.


The post selection required in our hybrid Liouvillian amounts to retaining only those trajectories in which no quantum jumps of the kind \(L'_\mu=\sqrt{\gamma}c^\dagger_\mu\) occur. 
For a scalable many-body system, such a condition is demanding. 
However, for the few-site proof-of-principle demonstration discussed above, the post selection is also feasible in principle. 
Specifically, the Hamiltonian dynamics conserve the total particle number \(N\), while the jump operators \(L_\mu=\sqrt{\gamma}c_\mu\) and \(L'_\mu=\sqrt{\gamma}c^\dagger_\mu\) respectively decrease and increase the total particle number by one, through incoherent dissipative processes. 
Therefore, with the simultaneous detection of emitted excitations from the auxiliary modes and the particle number of the system, trajectories containing an upward particle-number jump can in principle be identified and discarded. 

We note that alternative implementation schemes also exist, depending on the choice of the complementary jump operators. One may also consider a quantum-circuit implementation, by rewriting the fermion model into a dissipative spin chain through the Jordan-Wigner transformation.}



\begin{figure}
    \centering
    \includegraphics[width=0.85\linewidth]{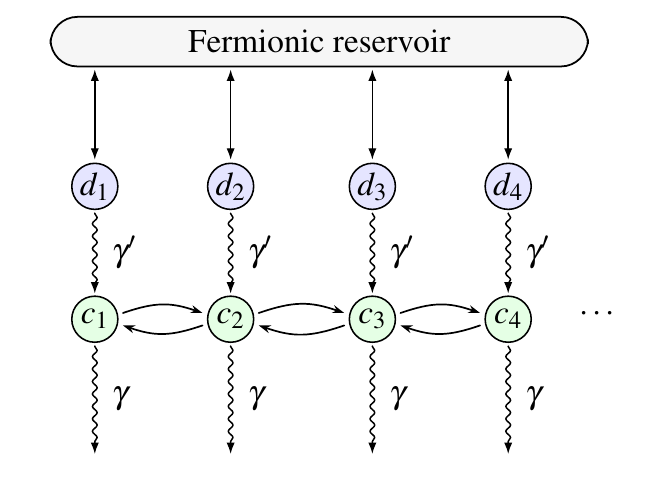}
    \caption{Schematic few-site implementation of the post-selected hybrid dynamics. The system fermions \(c_\mu\) form the system chain with loss channels $L_\mu=\sqrt{\gamma}c_\mu$. The complementary gain channel $L_\mu'=\sqrt{\gamma}c_\mu^\dagger$ can be engineered by coupling $c_\mu$ to an auxiliary fermionic mode $d_\mu$ through a dissipative jump. The auxiliary mode \(d_\mu\) is coupled to a fermionic reservoir that keeps it occupied.}
    \label{Fig3}
\end{figure}
 
\section{Conclusion}
We show that higher-order EPs can emerge in the Liouvillian eigenspectrum of dissipative fermions under partial post selection. Focusing on quadratic fermionic systems, we analytically solve the eigenspectrum, and show that the highest order of the Liouvillian EPs can approach the system size. We then demonstrate, both analytically and numerically, that the degeneracy at the EP can be lifted by introducing perturbations in the form of many-body quantum jump processes. 
Depending on the form of the perturbation, different fractional scalings of a higher-order EP can be observed, both in the Liouvillian spectrum and the system dynamics. 
We also emphasize that our main findings are based on hybrid-Liouvillians and cannot be reformulated directly for systems without post selection. Therein, steady states do not exhibit EPs, and the conditions for the shape matrix to be non-diagonalizable are more complicated.

Our work exemplifies the important role of EPs in many-body quantum open systems, and offers a physical context for the study of non-Hermitian physics in the many-body regime.


\begin{acknowledgments}
This work is supported by the National Natural Science Foundation of China (Grant No. 12374479), and by the Innovation Program for Quantum Science and Technology (Grant No. 2021ZD0301205).
\end{acknowledgments}

\appendix

\section{A proof that the matrices $T_\pm$ are nondiagonalizable.}
\label{appendixA}

In this appendix, we give a proof that the block matrices given in Eqs.~(\ref{T_p}) and (\ref{T_m}) are not diagonalizable.

\begin{lemma}
    Given a diagonalizable matrix $A$, the block matrix $M=\begin{pmatrix}
        A & B \\
        0 & A
    \end{pmatrix}$ is diagonalizable if there exists $X$, such that $B=[X,A]$.
\end{lemma}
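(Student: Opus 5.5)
The plan is to conjugate $M$ by an explicit block-unipotent similarity that removes the off-diagonal block, so that $M$ becomes similar to $A\oplus A$. Assume $B=[X,A]=XA-AX$ for some $X$. Take
\begin{equation*}
S=\begin{pmatrix} I & X\\ 0 & I\end{pmatrix},\qquad S^{-1}=\begin{pmatrix} I & -X\\ 0 & I\end{pmatrix}.
\end{equation*}
Then compute $S^{-1}MS$ blockwise. The diagonal blocks stay equal to $A$, and the lower-left block stays $0$. The upper-right block becomes
\begin{equation*}
AX+B-XA=AX+(XA-AX)-XA=0.
\end{equation*}
Hence $S^{-1}MS=\mathrm{diag}(A,A)$. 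The only point requiring care is the sign convention: with $B=[X,A]=XA-AX$ the choice $S$ above works, and under the opposite convention one replaces $X$ by $-X$.

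Since $A$ is diagonalizable, there is an invertible $Q$ with $Q^{-1}AQ=D$ diagonal. Then $(S\,\mathrm{diag}(Q,Q))^{-1}M\,(S\,\mathrm{diag}(Q,Q))=\mathrm{diag}(D,D)$, so $M$ is diagonalizable, which proves the lemma.

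No step here is a real obstacle; the lemma is a direct Sylvester-equation style conjugation. The harder part, which the appendix presumably addresses next, is the converse direction needed for $T_\pm$. If $M$ is diagonalizable, then $B$ must lie in the range of $\mathrm{ad}_A$, or equivalently, in an eigenbasis of $A$ the blocks of $Q^{-1}BQ$ connecting equal eigenvalues must vanish. For $T_-$ one has $A=\frac{i}{2}h$ and $B=iM$. In the example, $M=\gamma I_n$ commutes with $h$, so $B$ has nonzero components within each eigenspace of $A$. It therefore cannot be a commutator, because $\mathrm{tr}(P_\lambda[X,A])=0$ on each eigenprojector $P_\lambda$. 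I would check this with the eigenprojector trace argument to conclude nondiagonalizability.
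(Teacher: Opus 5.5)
Your proof is correct and is essentially the paper's own: the paper uses the same block-unipotent similarity $S$ (with $X$ in the upper-right block) composed with $\mathrm{diag}(P,P)$, gets the upper-right block $P^{-1}(B+AX-XA)P$, and notes that it vanishes when $B=[X,A]$. Your side remark is also apt: the paper's proof asserts an ``if and only if'', and the theorem on $T_\pm$ actually relies on the converse. That converse needs exactly the extra step you identify, namely that a diagonalizable $M$ forces $B$ to have no component between equal eigenvalues of $A$; after that, $\mathrm{Tr}(iM)\neq 0$ or your eigenprojector trace check completes the argument.
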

\begin{proof}
    Consider a block similar transformation with $S=\begin{pmatrix}
        I & X\\
         & I
    \end{pmatrix}$ and $S^{-1}=\begin{pmatrix}
        I &-X\\
         & I
    \end{pmatrix}$. Suppose that $A$ is diagonalizable through an invertible matrix $P$, that is, $D=P^{-1}AP$ is a diagonal matrix. We then obtain
    \begin{equation}
        (SP)^{-1}MSP=\begin{pmatrix}
            D & P^{-1}(B+ AX-XA)P \\
             & D
        \end{pmatrix}.
    \end{equation}
    Thus, $M$ is diagonalizable if and only if $B=[X, A]$.
\end{proof}
\begin{theorem}
    $T_\pm$ are not diagonalizable.
\end{theorem}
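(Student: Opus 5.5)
The plan is to prove the statement for $T_-$ and then deduce it for $T_+$. Since $T_+=-(T_-)^T$, and both transposition and multiplication by $-1$ preserve diagonalizability, it suffices to show that
\begin{equation*}
T_-=\begin{pmatrix} \frac{i}{2}h & iM\\ 0 & \frac{i}{2}h\end{pmatrix}
\end{equation*}
is not diagonalizable. Here $A=\frac{i}{2}h$ with $h$ Hermitian, so $A$ is diagonalizable by a unitary $U$, and $B=iM$ with $M=\sum_\mu \bm l_\mu \bm l_\mu^\dagger$ positive semidefinite. We assume throughout that $M\neq 0$, i.e.\ at least one jump is present; otherwise the claim is false.

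The first step is to state the sharp criterion: $\begin{pmatrix}A&B\\0&A\end{pmatrix}$ with diagonalizable $A$ is diagonalizable if and only if $B=[X,A]$ for some $X$. The ``if'' direction is the Lemma above, but we need the ``only if'' direction, which I would prove directly as follows.
\begin{itemize}
\item Conjugate by $\mathrm{diag}(U,U)$, so that $A\to D=\mathrm{diag}(\lambda_1,\dots,\lambda_n)$ and $B\to B'=U^\dagger BU$.
\item For index pairs with $\lambda_j\neq\lambda_k$, the entry $B'_{jk}$ can be removed by the shear $S=\begin{pmatrix}I&X\\0&I\end{pmatrix}$, choosing $X_{jk}=B'_{jk}/(\lambda_j-\lambda_k)$ (a Sylvester-equation argument). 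This leaves only the blocks $B'_{\mu}$ that connect equal eigenvalues $\mu$.
\item The matrix then splits, after a permutation, into a direct sum over distinct eigenvalues $\mu$ of $\begin{pmatrix}\mu I & B'_\mu\\ 0&\mu I\end{pmatrix}$. Each summand is diagonalizable if and only if $B'_\mu=0$, since otherwise $(\cdot-\mu)\neq0$ while $(\cdot-\mu)^2=0$.
\end{itemize}
Hence diagonalizability is equivalent to $P_\mu B' P_\mu=0$ for every eigenprojector $P_\mu$ of $D$. This is the same as $B$ lying in the range of $\mathrm{ad}_A$.

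The second step, which is the crux, is to show that this condition fails for $B=iM$. In the eigenbasis of $h$, the diagonal entries of $U^\dagger MU$ are $u_j^\dagger M u_j=\sum_\mu|u_j^\dagger \bm l_\mu|^2\ge 0$. Their sum is $\mathrm{Tr}M=\sum_\mu\|\bm l_\mu\|^2>0$, so at least one diagonal entry, say $(U^\dagger MU)_{jj}$, is strictly positive. A diagonal entry always lies in the block $P_{\lambda_j}B'P_{\lambda_j}$, so that block is nonzero and $T_-$ is not diagonalizable.

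Equivalently, every commutator $[X,A]$ has vanishing trace on each eigenspace of $A$, whereas $\mathrm{Tr}(P_\mu M)>0$ for some $\mu$. The main obstacle I anticipate is being careful that the Lemma as stated only gives the sufficient direction, so the converse in the first step must be established rather than quoted. The positivity argument itself is immediate once the criterion is in place. This also explains the Jordan blocks of size $2$ found in the main text: they appear exactly on eigenspaces $\mu$ of $h$ where $P_\mu M P_\mu\neq0$.
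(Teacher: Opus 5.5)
Your proposal is correct and follows the same route as the paper: you reduce to $T_-$, use the criterion that $\begin{pmatrix}A&B\\0&A\end{pmatrix}$ with diagonalizable $A$ is diagonalizable exactly when $B=[X,A]$ for some $X$, and rule this out for $B=iM$ because $\mathrm{Tr}M>0$ while every commutator is traceless (your eigenspace-resolved version is a refinement, but the global trace already suffices). The one real difference is that you actually prove the ``only if'' direction of the criterion, via the Sylvester shear and the nilpotency of the $\begin{pmatrix}\mu I&B'_\mu\\0&\mu I\end{pmatrix}$ blocks; the paper's proof of the theorem needs exactly this direction, but its Lemma 1 states only the ``if'' direction and asserts the converse in its last line without justification, so your version closes that gap.
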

\begin{proof}
    Here we focus on $T_-$, as the case for $T_+=-(T_-)^T$ is the same. According to Lemma 1, if $T_-$ can be diagonalized, there must exist a matrix $X$ such that $M=[H,X]$, which leads to $\mathrm{Tr}M=0$. However, $\mathrm{Tr} M=\sum_\mu\sum_i |l_{\mu,i}|^2$ is equal to zero only when all coefficients vanish $l_{\mu,i}=0$, which is a trivial case with no dissipation. Therefore, $T_-$ is not diagonalizable.
\end{proof}

\section{Properties of eigenmatrices}
\label{appendixB}

In this appendix, we show that for $\nu\neq0$, generalized eigenmatrices satisfy $\mathrm{Tr}r_{\nu,k}=\mathrm{Tr}Nr_{\nu,k}=0$, which enables us to evaluate the dynamics of the particle number Eq.~(\ref{eq:N}) in the main text. 

We first note that for a function $f$ of the total number of particles $f(N)$, if $f(N-1)$ is well-defined, we have
\begin{equation}
    \sum_j c_j^\dagger f(N)c_j= \sum_jc_j^\dagger c_jf(N-1)=Nf(N-1).
    \label{eq:b1}
\end{equation}
To derive the properties of eigenmatrices $r_{\nu,k}$, we multiply the generalized characteristic equations Eqs.~(\ref{Jordanchain1})-(\ref{Jordanchain3}) by $f(N)$ and take the trace. For Jordan blocks with eigenvalue $\lambda_\nu$ and size $m_\nu$, we obtain
\begin{align}
    &\mathrm{Tr}[Nf(N-1)r_{\nu,0}] - \frac{\lambda_\nu+n\gamma}{\gamma}\mathrm{Tr}f(N)r_{\nu,0}=0,\label{eq:b2}\\
     &\mathrm{Tr}[Nf(N-1)r_{\nu,k}] - \frac{\lambda_\nu+n\gamma}{\gamma}\mathrm{Tr}f(N)r_{\nu,k} \notag \\
     &\quad \propto \mathrm{Tr}f(N)r_{\nu,k-1},\quad k=1,2\cdots,m_\nu-1.\label{eq:b3}
\end{align}
In the above derivations, we have used $[H,f(N)]=0$ and Eq.~(\ref{eq:b1}). In the following, we consider various forms of the function $f(N)$ for the derivation.

We first consider eigenvalues $\lambda_\nu\neq -n\gamma$. Let $f(N)=\prod_{j=0}^k (N-j)$ ($k=0,1,\cdots,n$), for $r_{\nu,0}$, according to Eq.~(\ref{eq:b2}), we have the following equations
\begin{align}
\braket{N} & = \frac{\lambda_\nu+n\gamma}{\gamma}\mathrm{Tr}r_{\nu,0},  \\
\braket{N(N-1)} & = \frac{\lambda_\nu+n\gamma}{\gamma}\braket{N}, \\
&\cdots  \notag \\
\braket{\prod_{j=0}^{n} (N-j)} & = \frac{\lambda_\nu+n\gamma}{\gamma}\braket{\prod_{j=0}^{n-1} (N-j)}.
\end{align}
Note that $\prod_{j=0}^n(N-j)=0$, since $\lambda_\nu+n\gamma$ is nonzero, we have
\begin{align}
   & \braket{\prod_{j=0}^{n-1} (N-j)}=\braket{\prod_{j=0}^{n-2} (N-j)}\notag \\
   =&\cdots=\braket{N}=\mathrm{Tr}r_{\nu,0}=0.\label{eq:b7}
\end{align}
Similarly, by considering Eq.~(\ref{eq:b3}) for $f(N)$ above with different $k$, we have
\begin{equation}
    \mathrm{Tr} N r_{\nu,k} = \mathrm{Tr} r_{\nu,k}=0,
\end{equation}
where $k=0,1,\cdots,m_\nu-1$.

Next, we examine Jordan blocks associated with eigenvalues $\lambda_\nu=-n\gamma$ and $\nu\neq0$. For $r_{\nu,0}$, we set $f(N)=1$, and derive $\mathrm{Tr}Nr_{\nu,0}=0$. 
We then set $f(N)=(1+N)^{-1}$, in which case Eq.~(\ref{eq:b1}) cannot be directly applied, since $f(N-1)=N^{-1}$ is ill-defined. Note, however, that
\begin{equation}
    \sum_jc_j^\dagger (1+N)^{-1} c_j\ket{\bm\xi_k}=\begin{cases}
        \ket{\bm\xi_k}, \quad k\neq 0,\\
        0,\quad k=0.
    \end{cases}
\end{equation}
Hence,
\begin{equation}
    \sum_{j}c_j^\dagger (1+N)^{-1} c_j = I-\ket{\{0\}}\bra{\{0\}}=I-l_{0,0}.
\end{equation}
Consequently,
\begin{equation}
    \mathrm{Tr}[(1+N)^{-1}(\mathcal{L}_H+c)r_{\nu,0}]=\gamma\mathrm{Tr}(1-l_{0,0})r_{\nu,0}=0.
\end{equation}
Using the biorthogonality relation $\mathrm{Tr}(l_{0,0}r_{\nu,0})=0$, we obtain $\mathrm{Tr}r_{\nu,0}=0$. Proceeding similarly along the Jordan chain, one can show that, for each $k=0,1,\cdots,m_\nu-1$, we always have
\begin{equation}
    \mathrm{Tr} N r_{\nu,k} = \mathrm{Tr} r_{\nu,k}=0.
\end{equation}

\section{Exact result for $d=1$}\label{appendixC}

In the case of $d=1$, we have two sets of Lindblad operators $\{c_i\}$ and $\{c_i^\dagger\}$. Consider a more general perturbed hybrid-Liouvillian
\begin{equation}
    \mathcal{L}_H'\rho=-i[H,\rho]+\sum_{j,k}(M_{jk}c_j\rho c_k^\dagger+zM_{kj}^*c_k^\dagger\rho c_j)-\mathrm{Tr}M\rho.
\end{equation}
When $z=1$, the above equation reduces to the standard Lindblad master equation. The Hermitian matrix $M$ in the Majorana representation is given by
\begin{equation}
    M^M=\frac{1}{2}[M\otimes(I_2-\sigma_y)+zM^T\otimes(I_2+\sigma_y)].
\end{equation}
The shape matrix has a form similar to Eq.~(\ref{shapematrix})
\begin{equation}
    A'=T_1'\otimes I_2+T_2'\otimes i\sigma_y,
\end{equation}
where
\begin{align}
     T_1' &= \begin{pmatrix}
           -\frac{1}{2}h^{i} & \frac{1}{2}i(M+zM^T)\\
           - \frac{1}{2}i(M+zM^T)^T & -\frac{1}{2}h^{i}
       \end{pmatrix},\notag \\
       T_2'&= \begin{pmatrix}
           -\frac{1}{2}h^{r}  & -\frac{1}{2}(M-zM^T) \\
     -\frac{1}{2}(M-zM^T)^T& -\frac{1}{2}h^{r}
       \end{pmatrix}.
\end{align}
Following the derivation in the main text, we define the matrices $T_\pm'=T_1'\pm iT_2'$, with
\begin{align}
    T_+'&= \begin{pmatrix}
        -\frac{i}{2}h^T &ziM \\
        -iM ^T& -\frac{i}{2}h^T
    \end{pmatrix},\notag\\
    T_-'&=\begin{pmatrix}
\frac{i}{2}h&  iM\\
      -ziM^T  & \frac{i}{2}h
    \end{pmatrix}.
    \label{T_pm'}
\end{align}
The eigenvalues of $A'$ are thus the union of $T_\pm'$.

Moreover, we have $T_+'=-(T_-')^T$, which is consistent with the anti-symmetry of the shape matrix $A$. 
The matrices $T_\pm'$ are in general diagonalizable. More explicitly, in our example, $M=M^T=\gamma I_n$, thus the Fourier transform of $T_+'$ is
\begin{equation}
    T_+'(q)=-it\cos q+i\begin{pmatrix}
         &z\gamma\\
    -\gamma &
    \end{pmatrix},
\end{equation}
whose eigenspectrum is nondegenerate and given by $-it\cos q\pm\gamma z^{1/2}$. Thus, the hybrid-Liouvillian is diagonalizable, according to Eq.~(\ref{spectrum}). Given $2n$ binary numbers $\nu=(\nu_1,\nu_2,\cdots,\nu_{2n})$ with $\nu_i\in\{0,1\}$, we obtain the Liouvillian spectrum
\begin{align}
    \lambda_\nu=&2it\sum_{j=1}^n\cos\frac{2\pi j}{n}(\nu_{2j-1}+\nu_{2j})\notag \\
   & -z^{1/2}\gamma\sum_{j=1}^n(\nu_{2j-1}-\nu_{2j})-n\gamma.
\end{align}

We conclude that, under the perturbation with $d=1$, 
all EPs disappear, and the Liouvillian spectrum exhibits a gap of $\gamma z^{1/2}$. The maximum value of the real component of the eigenspectrum becomes $n\gamma(z^{1/2}-1)$, which reduces to $0$ for $z=1$ and recovers the case of the standard Lindblad master equation.


\end{document}